\documentclass[12pt,oneside,english]{amsart}
\usepackage[utf8]{inputenc}
\usepackage[a4paper]{geometry}
\geometry{verbose,tmargin=2.5cm,bmargin=2.5cm,lmargin=2.5cm,rmargin=3.6cm}
\usepackage{footnote}
\usepackage{color}
\usepackage{babel}
\usepackage{array}
\usepackage{subfig}
\usepackage{url}
\usepackage{amssymb}
\usepackage{verbatim}
\usepackage{hyperref}
\usepackage[foot]{amsaddr}
\usepackage{enumitem}
\usepackage{multirow}
\usepackage{amsmath}
\usepackage{amsthm}
\usepackage{graphicx}
\usepackage{setspace}
\usepackage[authoryear]{natbib}
\onehalfspacing

\usepackage{CJKutf8}

\makeatletter

  \theoremstyle{plain}
  \newtheorem{prop}{\protect\propositionname}
   
  \theoremstyle{plain}
  \newtheorem{cor}{\protect\corollaryname}
  \theoremstyle{remark}
  
  \theoremstyle{definition}
  
\theoremstyle{plain}

  \theoremstyle{plain}
  \newtheorem{lem}{\protect\lemmaname}

\definecolor{cite-blue}{RGB}{0,0,204}
\date{}

\makeatother

\usepackage[all]{xy}
\bibpunct{(}{)}{;}{a}{,}{,}
\usepackage{tikz}
\usetikzlibrary{patterns}
\usetikzlibrary{datavisualization.formats.functions}
\usetikzlibrary{fit}
\usetikzlibrary{fadings}
\usetikzlibrary[graphs]
\usetikzlibrary{shapes.misc, positioning}

  \providecommand{\definitionname}{Definition}
  
  \providecommand{\lemmaname}{Lemma}
  \providecommand{\propositionname}{Proposition}
  \providecommand{\remarkname}{Remark}
\providecommand{\corollaryname}{Corollary}
\providecommand{\theoremname}{Theorem}

\title{Incentive-compatible pricing of public transportation}


\author{Inácio Bó}
\thanks{\textbf{Inácio Bó}: University of Macau, Faculty of Social Sciences, Department of Economics, Macau; website: \protect\url{https://www.inaciobo.com/}; e-mail: inaciobo@um.edu.mo}

\author{Chiu Yu Ko}
\thanks{\textbf{Chiu Yu Ko}: Department of Decisions, Operations and Technology, The Chinese University of Hong Kong, Hong Kong; website: \protect\url{https://www.bschool.cuhk.edu.hk/staff/ko-chiu-yu/};
email: chiuyuko@cuhk.edu.hk.}
\thanks{Inácio Bó acknowledges financial support from the National Natural Science Foundation of China (Grant No. 72573203) and the Asia-Pacific Academy of Economics and Management seed grant.}
\date{\today}

\begin{document}


\begin{abstract}
  We study how to price public transportation when payment is enforced by random inspection rather than by physical barriers. Riders are strategic: they may travel uncovered, buy partial tickets, or split a trip into multiple tickets in order to reduce expected payment. We derive the highest prices that remain incentive-compatible, first in a line model and then in a network model, where riders might have multiple routes from origin to destination. We derive the highest incentive-compatible prices, showing that optimal pricing takes a simple additive form. We then apply the model to the Washington DC Metro. Holding demand fixed and solving for route choice over all feasible routes in the network, we show that switching to proof-of-payment while keeping current fares could reduce fare revenue by more than 60\%. By contrast, simply reducing some targeted fares to make them incentive compatible can recover most of that revenue loss, reducing it to between 15\% and 21\%.
\end{abstract}

\maketitle

\section{Introduction}

Many services charge users by controlling access ex ante: payment is made at the entrance and non-payers are physically excluded. Public transportation systems often work this way through turnstiles, gated stations, or front-door payments on buses. A proof-of-payment system takes a different approach. Riders are allowed to enter and travel freely, and compliance is enforced ex post through random inspection and fines. This design has clear operational advantages. It reduces infrastructure and staffing requirements, allows stations and vehicles to operate with fewer physical bottlenecks, and gives operators greater flexibility in how access points and stops are designed. The downside is equally clear: once physical barriers are removed, riders can underpay more easily.

The pricing problem is therefore different under proof-of-payment. A rider can evade in several ways. She may travel uncovered, buy a ticket for only part of the trip, or split a trip into multiple tickets if the fare schedule makes that profitable. Once these possibilities are admitted, ticket design cannot be separated from incentives. Prices that are reasonable under gated entry may induce substantial underpayment under random inspection.

This paper studies the highest prices that remain incentive-compatible in proof-of-payment systems. We first analyze a line model, which isolates the core logic and yields the pricing rule in its cleanest form. We then turn to a general network model, which shows how the same logic extends to realistic transit systems in which passengers may choose among several routes. Throughout, demand is treated as fixed. This is a deliberate modeling choice: our objective is not to solve a general-equilibrium transit-demand problem, but to characterize fares that deter underpayment conditional on a travel need. The resulting prices should therefore be interpreted as prices that preserve truthful payment among riders who still choose to travel, which is the natural object when the policy question is how to run proof-of-payment while maintaining service use.

Our main theoretical result is simple and sharp. For a given inspection technology, there is a maximal incentive-compatible price for each trip---the highest fare a rider will still pay rather than risk the fine from traveling uncovered. This price equals the expected fine exposure from traveling uncovered. Pricing works in the same way in the network model. However, this solution is not trivial in networks because the incentive-compatible prices depend on the equilibrium flow of passengers, which itself depends on the set of available routes, their prices, and the strategic choices of all riders. The existence of such a pricing system requires that, at equilibrium, no passenger has an incentive to deviate from the route and ticketing strategy assumed by the pricing structure. We show that such pricing schedule exists by formulating a game in which passengers choose routes and ticket purchases, but where prices are endogenous to these choices. This game has an equilibrium, which itself results in a pricing schedule of incentive-compatible prices that supports truthful payment and equilibrium route choices.

We then apply the model to the Washington DC Metro using network and origin-destination demand data. The application keeps demand fixed but allows route choice over all feasible routes for each origin-destination pair, and computes incentive-compatible prices at the resulting route-equilibrium flows. A switch to proof-of-payment while keeping current DC fares would produce a revenue loss of \textbf{more than 60\%}. By contrast, adjusting fares so that riders do not profit from underpayment reduces the revenue loss to \textbf{between 15\% and 21\%}, while making many trips cheaper than under the current system. Fare design that accounts for rider incentives can therefore have large revenue consequences in proof-of-payment systems. Incentive-compatible pricing addresses this, in this application, by adjusting only the fares that create profitable evasion, leaving the rest of the schedule in place. This can recover revenue that a price structure blind to incentives would sacrifice.

\section{Related literature}

Our paper relates first to the economics of evasion, monitoring, and punishment. The broader literature studies how compliance responds to inspection intensity, fines, and the strategic environment in settings such as shoplifting, piracy, tax evasion, and parking violations \citep{becker_crime_1968,yaniv_shoplifting_2009,perlman_reducing_2014,chellappa_managing_2005,slemrod_cheating_2007,fisman_corruption_2007}. In Becker's benchmark model, maximal fines are attractive because they economize on enforcement. Later work shows that this conclusion need not survive once one allows for richer constraints or avoidance responses \citep{polinsky_optimal_1979,polinsky_optimal_1984,polinsky_economic_2000,malik_avoidance_1990}. Our contribution is not to solve for an optimal fine, but to characterize the highest fares that remain compatible with truthful payment when fines and inspection exposure are taken as given.

Second, the paper contributes to the fare-evasion literature in public transportation. Early contributions document the basic economics of fare evasion in transit \citep{boyd_fare_1989,kooreman_fare_1993}. More recent work studies the determinants of fare evasion, rider behavior in proof-of-payment systems, and optimal inspection intensity \citep{barabino_what_2015,delbosc_why_2019,barabino_fare_2020,barabino_moving_2019}. Relative to that literature, our focus is on fare design rather than on rider profiling or inspection policy alone. We ask how the fare schedule itself should be chosen when riders can respond strategically to proof-of-payment enforcement.

Third, our paper is related to the operations-research and security-game literature on inspections and monitoring in networks \citep{yin_trusts_2012,jain_security_2013,tirachini_multimodal_2012}. This literature typically takes the pricing environment as given and studies where to allocate inspectors or how to solve the resulting network optimization problem. The closest paper is \cite{correa_fare_2017}, which also features strategic riders in a transit network. There are two key formal differences. First, they study a bilevel (Stackelberg) problem in which the network operator chooses where to deploy inspectors and riders best-respond; fares are either fixed exogenously or optimized jointly with inspections. We take the opposite perspective: the inspection technology is given and the object of interest is the maximal fare schedule compatible with truthful payment. Second, evasion in their model is binary---a rider either buys a full ticket or travels without one---so there is no scope for partial underpayment. In our framework, riders may buy a ticket covering only part of the trip or split a longer trip into cheaper segments, and incentive-compatible prices are derived precisely from ruling out these richer deviations. Given these modeling choices, the two papers yield complementary contributions: theirs is algorithmic, designing exact and approximation algorithms for the inspector-placement problem; ours is economic, providing a closed-form pricing rule and a quantitative assessment of its revenue implications. In addition, in our network model the inspection exposure on a segment depends on the passenger flow through that segment, which creates a direct link between network usage and incentive-compatible prices.

Finally, the application connects to the economics of transit pricing more broadly \citep{mohring_optimization_1972,keeler_optimal_1977,parry_how_2010,de_palma_economics_2017}. That literature studies efficient fares, congestion, service design, and pricing in urban transportation systems. Our paper is complementary. We abstract from endogenous demand and service choice in order to isolate a different object: the incentive-compatible fare schedule induced by proof-of-payment enforcement.

\section{Baseline model\label{section:baseline}}

There is a positive mass of nonatomic passengers residing on the open line $(0,1)$, meaning that the passenger population is modeled as a continuum and that each individual rider has measure zero. Denote by $\mathcal{I}$ the set of passengers.\footnote{The use of the open interval is only a convenience: trips start and end in the interior, and working on $(0,1)$ lets us avoid boundary cases at the endpoints without changing the substance of the model.} Passengers have travel demand density $d\left(x,y\right)$, where $x$ is the origin and $y$ the destination. The function $d$ is continuously differentiable, and $d\left(x,y\right)>0$ for any $x,y\in(0,1)$ with $x\neq y$. For any interior origin $x_0\in(0,1)$, the quantity $\int_0^1 d\left(x_0,y\right)dy$ is the mass of passengers who want to travel from $x_0$.

An important object for our purposes is the density of passengers who pass through a point $a$ on their trip. For each $a\in(0,1)$, define

\begin{equation}
    d^{pass}\left(a\right) = \int_0^a \int_a^1 d\left(x,y\right)dy\ dx + \int_a^1 \int_0^a d\left(x,y\right)dy\ dx.
\end{equation}

The first term is the mass of passengers traveling from the left of $a$ to the right of $a$, and the second term is the mass traveling in the opposite direction. We assume throughout that $d^{pass}(a)>0$ for every $a\in(0,1)$.

To reduce fare evasion, the transportation authority has a mass $\lambda^{total}$ of inspectors and distributes them along the line. Without loss of generality, we normalize this mass to unity, that is, $\lambda^{total}=1$, so that the inspector allocation can be represented by a density $\lambda\left(a\right)$.

A mass $\lambda(a)$ of inspectors at point $a$ generates inspections as a Poisson process with aggregate rate $\phi(\lambda(a))$ per unit length, where $\phi$ is monotonically increasing, differentiable, and concave; each inspection event selects one passenger uniformly at random from the $d^{pass}(a)$ passengers passing through~$a$. Concavity of $\phi$ captures diminishing returns from concentrating inspectors at one location. Since passengers' trips start and end in $(0,1)$, every feasible trip is contained in a compact subinterval of $(0,1)$. We also assume that, for every compact interval $I\subset(0,1)$, the function $a\mapsto \phi(\lambda(a))/d^{pass}(a)$ is integrable on $I$. This condition simply rules out infinite expected inspection exposure on any feasible trip.

There is a \textbf{pricing scheme} $p:[0,1]^2\to\mathbb{R}_+$ for every pair of origin and destination. Any ticket can be purchased at every point of departure, at a price $p\left(x,y\right)$, where $x$ is the origin and $y$ the destination.\footnote{This includes the possibility that tickets can be purchased via mobile phones, in the train, or that previously purchased tickets are ``validated'' for use at any time inside of the bus or subway.}

We assume that a passenger who demands a trip makes it entirely using public transportation. She is free, however, to choose any path between her origin and destination, and to buy any set of tickets, if any, for that trip. This is realistic even on a line. For example, a rider traveling from $x$ to $z$ through an intermediate point $y$ may buy one ticket $(x,z)$ or two tickets $(x,y)$ and $(y,z)$.\footnote{A useful real-world example comes from Berlin. In March 2026, a regular single ticket costs EUR~4.00 and is valid for two hours, whereas a short-trip ticket costs EUR~2.80 but is valid for only three subway stops or six tram or bus stops. A rider who wants to travel four subway stops, or seven tram or bus stops, would have to pay about 43\% more for the regular ticket just to cover that additional stop. Alternatively, she could buy the cheaper short-trip ticket and be fully covered on all but the last uncovered segment of the trip; if inspected before then, she is valid, and only an inspection after the short-trip validity ends would generate a fine.}  Hence, a passenger's decision consists of the combination of (i) a path from her origin to her destination and (ii) a set of tickets. Formally, a \textbf{travel plan} for passenger $i\in\mathcal{I}$, with origin $x_i$ and destination $y_i$, is a pair $s_i=(\sigma_i,\tau_i)$. A path is a finite ordered list $\sigma_i=\left(z_i^0,z_i^1,\ldots,z_i^m\right)$ with $z_i^0=x_i$ and $z_i^m=y_i$. It induces the ordered multiset of traveled segments
\[
\mathcal{S}(\sigma_i)=\left\{\left(z_i^{h-1},z_i^h\right):h=1,\ldots,m\right\}.
\]
The ticket choice $\tau_i$ is a finite multiset of origin-destination pairs. A ticket $(a,b)$ is an ordered pair with origin $a$ and destination $b$; it covers every point in the interval $[\min\{a,b\},\max\{a,b\}]$, so coverage depends on the physical stretch between origin and destination. A ticket $(a,b)$ covers a traveled segment $(c,d)$ if $[\min\{c,d\},\max\{c,d\}]\subseteq [\min\{a,b\},\max\{a,b\}]$. If a purchased ticket covers only part of a traveled segment, that ticket's endpoint can be inserted into the path as an additional breakpoint, splitting the segment there. This refinement does not affect total ticket payment or total expected inspection exposure, since both quantities are additive over adjacent sub-intervals: the cost of covering a segment equals the sum of costs over any sub-partition of it, and the same holds for inspection exposure. Therefore, without loss of generality, we may assume that every segment of the path is either fully covered by one of the purchased tickets or entirely uncovered. We denote by $\mathcal{U}(\sigma_i,\tau_i)\subseteq \mathcal{S}(\sigma_i)$ the multiset of uncovered traveled segments.

The expected number of inspections experienced when traveling between $x$ and $y$ is given by

\begin{equation}
\label{eqn:ProbBeingInspected}
    q\left(x,y,\lambda\right)=\left|\int_x^y \frac{\phi\left(\lambda\left(a\right)\right)}{d^{pass}\left(a\right)} da\right|.
\end{equation}

Equation~\eqref{eqn:ProbBeingInspected} follows directly from the inspection technology. A given passenger therefore faces inspections at per-unit-length rate $\phi(\lambda(a))/d^{pass}(a)$, and the total expected number of inspections on a trip from $x$ to $y$ is the integral in~\eqref{eqn:ProbBeingInspected}. Each inspection of an uncovered passenger independently costs a fine of~$\alpha$, so the expected fine from traveling uncovered between $x$ and $y$ is $\alpha\,q(x,y,\lambda)$. When $\phi(x)=x$, for instance, the per-unit inspection rate is simply the inspector-to-passenger ratio. Note that $q$ is symmetric: $q(x,y,\lambda)=q(y,x,\lambda)$. This follows from the absolute value in~\eqref{eqn:ProbBeingInspected} and, economically, from the fact that $d^{pass}(a)$ counts passengers passing through~$a$ in both directions. As a consequence, the incentive-compatible price satisfies $p^*(x,y)=p^*(y,x)$: although tickets are formally directional objects, the maximal incentive-compatible fare is the same in both directions. The function $q$ is also additive on adjacent intervals: if $x\le z\le y$, then
\[
q(x,y,\lambda)=q(x,z,\lambda)+q(z,y,\lambda).
\]

If a passenger is inspected at any point $a\in(0,1)$ and does not have a valid ticket at that point, she receives a fine of $\alpha$. Moreover, if she has a ticket with origin-destination $(x,y)$ but either $a<x$ or $a>y$, then she is also punished with the same fine.\footnote{Not differentiating between these cases for fine purposes is in line with the real-life practices that we are aware of.} While we later discuss how the transportation authority should distribute inspectors, we do not count fine revenue as part of the authority's objective. This is natural when fines accrue to a different budget line, or when the objective is fare revenue rather than enforcement revenue.

We assume that passengers have quasi-linear utilities, are risk-neutral, and maximize expected utility. Passenger $i\in\mathcal{I}$ derives utility $u_i\left(x,y\right)$ from making a trip from $x$ to $y$. If she chooses travel plan $s_i=(\sigma_i,\tau_i)$, her expected utility is

\begin{equation}
  u_i\left(x,y\right) - \alpha\sum_{(a,b)\in \mathcal{U}(\sigma_i,\tau_i)} q(a,b,\lambda) -\sum_{(a,b)\in \tau_i} p\left(a,b\right),
\end{equation}
where the first term is the utility from completing the trip, the second term is the expected fine on uncovered segments, and the third term is total ticket expenditure. The inspection probabilities enter through the expected-exposure term $q$.

Because passengers are nonatomic, a unilateral deviation by a single passenger does not affect aggregate passenger flow, and hence does not affect $d^{pass}$ or the exposure function $q$. Every rider therefore takes $q$ as given when choosing how to travel and what tickets to buy. A pricing scheme is \textbf{incentive-compatible} if, given those aggregate conditions, almost every passenger finds it optimal to buy the ticket corresponding to her origin and destination and to travel directly between them.\footnote{Here ``almost every" means all passengers except possibly a measure-zero set.}

For any pattern of ticket purchases, let $m(x,y)$ denote the mass of passengers who buy the ticket $(x,y)$. The \textbf{total revenue} of the transportation authority is then

\begin{equation}
  \pi(p,m)=\int_0^1 \int_0^1 p\left(x,y\right)m(x,y)\,dy\,dx.
\end{equation}

A pricing scheme $p^*$ is \textbf{maximal among incentive-compatible prices} if no other incentive-compatible pricing scheme yields weakly higher revenue for every origin-destination pair and strictly higher revenue for some pair. In the present environment, this notion coincides with maximizing fare revenue among incentive-compatible pricing schemes.

In the present model, the passenger's relevant outside option is to travel without buying any ticket. Hence, no incentive-compatible pricing scheme can leave a passenger with payoff below $u_i\left(x,y\right)-\alpha q(x,y,\lambda)$. We say that a pricing scheme $p^*$ is \textbf{full-surplus-extracting} if it is incentive-compatible and every passenger $i\in\mathcal{I}$ obtains exactly the payoff $u_i\left(x,y\right)-\alpha q(x,y,\lambda)$ under $p^*$. The ``surplus'' being extracted is not the travel utility $u_i(x,y)$ but the gap between the travel utility and the no-ticket outside option. Under $p^*$ the rider is indifferent between paying the fare and traveling uncovered; the authority captures all of the willingness to pay above what the rider could guarantee by evading.

Our first result identifies the highest full-ticket price compatible with truthful payment. In what follows, we say that a path is \textbf{weakly dominated} by another if the latter yields weakly lower total cost---ticket payments plus expected fines---regardless of the ticket coverage chosen; the domination is \textbf{strict} if the cost reduction is strict.

\begin{lem}
\label{lem:directRoute}
Fix any nonnegative pricing scheme and any inspection exposure function $q$. For a passenger traveling from $x$ to $y$ on a line, every path that backtracks is weakly dominated by the direct path that traverses the interval between $x$ and $y$ exactly once. The domination is strict whenever the original path traverses some subinterval more than once or goes outside the interval between origin and destination.
\end{lem}

\begin{proof}
Any backtracking path contains all points on the direct path, and in addition either traverses some subinterval more than once or visits points outside the interval between origin and destination. Since ticket prices are nonnegative and expected fines are nonnegative on every traveled segment, all extra travel weakly increases total cost and never changes the travel utility $u_i(x,y)$. Therefore the direct path weakly dominates any path with backtracking, and the domination is strict whenever the path contains any strictly extra travel.
\end{proof}

\begin{prop}\label{prop1}
The pricing scheme $p^*\left(x,y\right)=\alpha q(x,y,\lambda)$ is incentive-compatible, full-surplus-extracting, and maximal among incentive-compatible pricing schemes.
\end{prop}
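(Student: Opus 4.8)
The plan is to funnel all three claims through a single inequality: under $p^{*}$, every feasible strategy of a passenger with origin $x_i$ and destination $y_i$ costs her in expectation at least $\alpha q(x_i,y_i,\lambda)$, with equality when she walks directly and buys the single ticket $(x_i,y_i)$ --- call this the \emph{honest} strategy. First I would record the structural facts about $q$ that everything rests on: the integral in its definition exhibits $q$ as a non-negative, additive interval function along the line, i.e.\ $q(x,y,\lambda)=q(x,z,\lambda)+q(z,y,\lambda)$ for $z$ between $x$ and $y$ (consistent with the symmetry noted there), so the covering inequality $\sum_{j}q(I_j,\lambda)\ge q(J,\lambda)$ holds whenever intervals $I_1,\dots,I_n$ cover an interval $J$. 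I would also note that $q$ depends only on the primitives $d$ (through $d^{pass}$) and on $\lambda$, not on the strategy profile, so passengers exert no externalities on one another and each faces an isolated optimization; consequently, once the honest strategy is shown to be a best response, the profile in which every passenger plays honestly is an equilibrium, and in it everyone holds the correct ticket --- giving incentive-compatibility.

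For the key inequality, fix a passenger $i$ and a strategy $(\sigma_i,\tau_i)$. Her path is a route on the line from $x_i$ to $y_i$, so the union $P$ of the segments she traverses contains the interval $J_i:=[\min(x_i,y_i),\max(x_i,y_i)]$; let $K\subseteq[0,1]$ be the union of the sub-intervals on which her tickets are valid. By the payoff expression her expected cost is $\alpha$ times the summed $q$ over the traversed segments lying outside her valid tickets, plus the total price of her tickets, which since $p^{*}=\alpha q$ equals $\alpha$ times the summed $q$ of her tickets. The unticketed traversed segments cover $P\setminus K$ and the ticket intervals cover $K$ (overlaps only inflate the bound), so the expected cost is at least $\alpha$ times the sum of the $q$-contents of $P\setminus K$ and of $K$; since these two sets are disjoint with union $P\cup K\supseteq P\supseteq J_i$ and $q$ is non-negative and additive, this is at least $\alpha q(x_i,y_i,\lambda)$, with equality exactly for the honest strategy --- so honest is a best response. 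Full-surplus extraction is then immediate: in the honest equilibrium $s^{*}$ passenger $i$ pays exactly $p^{*}(x_i,y_i)=\alpha q(x_i,y_i,\lambda)$ and earns $u_i(x_i,y_i)-\alpha q(x_i,y_i,\lambda)$. Revenue-maximality follows as noted before the statement, and I would spell it out: under any pricing scheme $p'$ and any equilibrium $s'$, passenger $i$ can deviate to ``walk directly, buy nothing'', whose payoff $u_i(x_i,y_i)-\alpha q(x_i,y_i,\lambda)$ does not involve $p'$, so her equilibrium payoff is at least that; hence her ticket spending is at most $\alpha q(x_i,y_i,\lambda)$, and integrating over $\mathcal I$ gives $\pi(p',s')\le\int_{\mathcal I}\alpha q(x_i,y_i,\lambda)\,di=\pi(p^{*},s^{*})$.

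The real work --- and the one place an argument could go wrong --- is this key inequality, because the formalism lets a passenger pick an essentially arbitrary path (detours, retraced or out-of-the-way segments) and an essentially arbitrary ticket set (overlapping, nested, or covering points she never visits), while the fine is accrued segment-by-segment through a rather rigid expression in the payoff. One must check carefully that extra path segments and extra or redundant tickets contribute only non-negative terms, that a ticket protects only its own interval so mis-placed tickets are pure waste, and that the discrete segment-wise fine dominates the ``continuous'' $q$-content of the unticketed part of the trip. Once that bookkeeping is pinned down, everything else --- additivity and symmetry of $q$, the equilibrium argument, full-surplus extraction, and revenue-maximality --- is routine.
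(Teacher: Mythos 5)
Your proposal is correct and follows essentially the same route as the paper's proof: both rest on the non-negativity and additivity of $q$ along the line to show that every path/ticket combination costs at least (in the paper, exactly, for undominated strategies) $\alpha q(x,y,\lambda)$, with the full-ticket strategy attaining the bound, and both derive full surplus extraction and revenue-maximality from the fact that the no-ticket outside option caps each passenger's willingness to pay at $\alpha q(x,y,\lambda)$. Your step of integrating that outside-option bound over $\mathcal{I}$ merely makes explicit what the paper asserts in its closing paragraph.
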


\begin{proof}
Fix a passenger $i$ with origin $x$ and destination $y$, and suppose without loss of generality that $x<y$. By Lemma~\ref{lem:directRoute}, it is enough to consider the direct path from $x$ to $y$.

Under the full ticket $(x,y)$, the passenger's utility is
\[
U_i^{Full}(x,y)=u_i(x,y)-p^*(x,y)=u_i(x,y)-\alpha q(x,y,\lambda).
\]
If she buys no ticket, then her utility is
\[
U_i^{None}(x,y)=u_i(x,y)-\alpha q(x,y,\lambda).
\]

Now consider an arbitrary ticket choice along the direct path. Let
\[
x=t_0<t_1<\cdots<t_m=y
\]
be a partition that contains all ticket endpoints. Since the path has been refined at every ticket endpoint, each interval $(t_{j-1},t_j)$ is either covered by some ticket or uncovered. Let $C\subseteq\{1,\ldots,m\}$ be the indices of covered intervals. The passenger's utility is then
\begin{align*}
U_i^{Partial}(x,y)
&=u_i(x,y)-\alpha\sum_{j\notin C} q(t_{j-1},t_j,\lambda)-\sum_{j\in C} p^*(t_{j-1},t_j) \\
&=u_i(x,y)-\alpha\sum_{j\notin C} q(t_{j-1},t_j,\lambda)-\alpha\sum_{j\in C} q(t_{j-1},t_j,\lambda) \\
&=u_i(x,y)-\alpha\sum_{j=1}^m q(t_{j-1},t_j,\lambda) \\
&=u_i(x,y)-\alpha q(x,y,\lambda),
\end{align*}
where the last equality uses the additivity of $q$ on adjacent intervals.

Therefore every direct ticketing strategy gives the same payoff, namely $u_i(x,y)-\alpha q(x,y,\lambda)$. In particular, buying the full ticket for the direct path is optimal for every passenger, so $p^*$ is incentive-compatible. The same calculation shows that every passenger's payoff under that choice is exactly $u_i(x,y)-\alpha q(x,y,\lambda)$, so $p^*$ is full-surplus-extracting.

It remains to show maximality among incentive-compatible prices. Consider any pricing scheme $\tilde p$ under which passengers optimally choose full payment. Fix a passenger traveling from $x$ to $y$, and again reduce attention to the direct path by Lemma~\ref{lem:directRoute}. Let $C$ be the set of intervals on which she buys coverage, with respect to a partition as above. Her fare payment is $\sum_{j\in C}\tilde p(t_{j-1},t_j)$. If
\[
\sum_{j\in C}\tilde p(t_{j-1},t_j)>\alpha\sum_{j\in C} q(t_{j-1},t_j,\lambda),
\]
then deviating to the same path but buying no ticket on those covered intervals would strictly increase her payoff. Therefore every incentive-compatible payment made by a passenger traveling from $x$ to $y$ is bounded above by
\[
\alpha\sum_{j=1}^m q(t_{j-1},t_j,\lambda)=\alpha q(x,y,\lambda)=p^*(x,y).
\]
Thus no incentive-compatible pricing scheme can generate more fare revenue from that passenger than $p^*$ does. Integrating over all origin-destination pairs, no incentive-compatible pricing scheme yields higher total fare revenue than $p^*$. Hence $p^*$ is maximal among incentive-compatible pricing schemes.
\end{proof}

The assumption of risk neutrality is convenient for identifying the highest incentive-compatible price in expected-value terms. For a risk-averse passenger, the certain payment $p^*(x,y)=\alpha q(x,y,\lambda)$ is weakly preferred to any lottery with the same mean generated by traveling uncovered, so the pricing scheme $p^*$ remains incentive-compatible under risk aversion. Indeed, the maximal incentive-compatible price is strictly \emph{higher} under risk aversion: because a risk-averse rider values the uncertain fine lottery below its expected value, a fare above $\alpha q(x,y,\lambda)$ can be charged while still deterring evasion. The risk-neutral prices derived here are therefore conservative lower bounds on what could be sustained with risk-averse riders, and the revenue comparisons in the application understate the revenue achievable under a more realistic behavioral assumption.

Next, we consider the distribution of inspectors $\lambda$. The focus of this paper is on prices and the incentives they induce on strategic passengers, and therefore we have taken the inspection distribution as given. However, understanding the revenue-maximizing inspection distribution is useful for interpreting whether the scenarios we consider are reasonable.

A distribution of inspectors $\lambda$ is \textbf{revenue-maximizing} if it maximizes the transportation authority's revenues under a revenue-maximizing pricing scheme. The next proposition shows that the problem is particularly simple once Proposition~\ref{prop1} is established.

\begin{prop}
\label{prop:optimalDistributionInspectors}
A feasible distribution of inspectors $\lambda$ is revenue-maximizing only if, for any two points $x,y\in(0,1)$,
\[
\phi'\left(\lambda\left(x\right)\right)=\phi'\left(\lambda\left(y\right)\right).
\]
Conversely, any feasible distribution that satisfies this condition is revenue-maximizing.
\end{prop}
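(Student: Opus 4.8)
The plan is to reduce, for an arbitrary fixed distribution of inspectors $\lambda$, the maximal revenue it permits to a one-dimensional functional of $\lambda$, and then to solve the resulting constrained variational problem; the first-order condition will be exactly the stated equality $\phi'(\lambda(x))=\phi'(\lambda(y))$.

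First I would fix a density $\lambda$ with $\lambda\ge 0$ and $\int_0^1\lambda(a)\,da=1$ and identify the revenue it allows. By Proposition~\ref{prop1}, a revenue-maximizing scheme for this $\lambda$ is $p^*(x,y)=\alpha q(x,y,\lambda)$, and in the corresponding equilibrium every passenger buys the full ticket of her trip, so the mass buying ticket $(x,y)$ is the demand density $d(x,y)$. Hence the revenue associated to $\lambda$ is
\[ R(\lambda)=\alpha\int_0^1\!\!\int_0^1 d(x,y)\,q(x,y,\lambda)\,dy\,dx . \]

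The crux is then to substitute $q(x,y,\lambda)=\bigl|\int_x^y \phi(\lambda(a))/d^{pass}(a)\,da\bigr|$ and exchange the order of integration. Splitting the region into $x<y$ and $x>y$ and integrating out $(x,y)$ for each fixed $a$, the passengers whose trip passes through $a$ are collected: those with $x<a<y$ contribute $\int_0^a\!\int_a^1 d(x,y)\,dy\,dx$ and those with $y<a<x$ contribute $\int_a^1\!\int_0^a d(x,y)\,dy\,dx$, and by the very definition of $d^{pass}$ these sum to $d^{pass}(a)$. The denominator cancels, leaving
\[ R(\lambda)=\alpha\int_0^1 \phi(\lambda(a))\,da . \]
I expect this cancellation --- recognizing that the mass accumulating at $a$ is precisely $d^{pass}(a)$ --- to be the only genuinely substantive step; what precedes it is bookkeeping and what follows is routine.

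It remains to maximize $\int_0^1\phi(\lambda(a))\,da$ over densities $\lambda\ge 0$ with $\int_0^1\lambda(a)\,da=1$. Writing the Lagrangian $\mathcal{L}(\lambda)=\int_0^1\phi(\lambda(a))\,da-\mu\bigl(\int_0^1\lambda(a)\,da-1\bigr)$ and optimizing pointwise in the value $\lambda(a)$ gives the first-order condition $\phi'(\lambda(a))=\mu$ for almost every $a$, that is, $\phi'(\lambda(x))=\phi'(\lambda(y))$ for all $x,y$, which is the claim. To see that such a stationary $\lambda$ is in fact a global maximizer --- so that the condition is sufficient, not merely necessary --- I would appeal to the concavity of $\phi$: Jensen's inequality gives $\int_0^1\phi(\lambda(a))\,da\le\phi\bigl(\int_0^1\lambda(a)\,da\bigr)=\phi(1)$, and any $\lambda$ meeting the first-order condition attains this bound (if $\phi$ is strictly concave the condition forces $\lambda\equiv 1$; on any interval where $\phi$ is affine a localized Jensen argument yields equality). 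This last verification, which uses concavity rather than mere monotonicity of $\phi$, is the only place where a little care is needed.
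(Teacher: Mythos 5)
Your proposal is correct and follows essentially the same route as the paper: reduce to the full-surplus-extracting revenue from Proposition \ref{prop1}, exchange the order of integration so that the definition of $d^{pass}(a)$ cancels the denominator in $q$, and solve the constrained problem with a Lagrangian to obtain $\phi'(\lambda(a))=\text{const}$. The only differences are that you perform the cancellation in the objective (yielding the clean form $\alpha\int_0^1\phi(\lambda(a))\,da$) rather than inside the first-order condition, and that you add a global-optimality check via Jensen --- a welcome addition, though note it needs concavity of $\phi$, which is slightly stronger than the quasi-concavity the model assumes.
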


\begin{proof}
From Proposition~\ref{prop1}, the revenue-maximizing pricing scheme is $p^*(x,y)=\alpha q(x,y,\lambda)$. Therefore,
\begin{align*}
\pi
&=\alpha \int_0^1 \int_x^1 d(x,y)\left[\int_x^y \frac{\phi\left(\lambda(a)\right)}{d^{pass}(a)}da\right]dy\,dx
+\alpha \int_0^1 \int_0^x d(x,y)\left[\int_y^x \frac{\phi\left(\lambda(a)\right)}{d^{pass}(a)}da\right]dy\,dx.
\end{align*}
Applying Fubini's theorem and using the definition of $d^{pass}(a)$ yields
\begin{align*}
\pi
&=\alpha \int_0^1 \frac{\phi\left(\lambda(a)\right)}{d^{pass}(a)}
\left[\int_0^a \int_a^1 d(x,y)dy\,dx + \int_a^1 \int_0^a d(x,y)dy\,dx\right] da \\
&=\alpha \int_0^1 \phi\left(\lambda(a)\right)da.
\end{align*}
Hence the transportation authority solves
\[
\max_{\lambda\ge 0}\; \alpha \int_0^1 \phi\left(\lambda(a)\right)da
\quad\text{subject to}\quad
\int_0^1 \lambda(a)da=1.
\]
The feasible set is convex, and the objective is concave because $\phi$ is concave. Therefore the Lagrangian first-order conditions are necessary and sufficient. The Lagrangian is
\[
\mathcal{L}=\alpha\int_0^1 \phi\left(\lambda(a)\right)da + \Lambda\left[1-\int_0^1 \lambda(a)da\right].
\]
Its first-order condition at each $a\in(0,1)$ is
\[
\alpha\phi'\left(\lambda(a)\right)-\Lambda=0,
\]
which implies
\[
\phi'\left(\lambda(x)\right)=\phi'\left(\lambda(y)\right)=\Lambda/\alpha
\]
for every $x,y\in(0,1)$. This proves necessity. Since the optimization problem is concave, any feasible distribution satisfying these first-order conditions is also a global maximizer, which proves sufficiency.
\end{proof}

Proposition~\ref{prop:optimalDistributionInspectors} says that the revenue-maximizing inspection distribution equalizes the marginal effectiveness of inspectors at every point. Two immediate corollaries follow.

\begin{cor}
\label{cor:OptimalInspectorsStrictlyConcave}
If $\phi$ is strictly concave, the unique revenue-maximizing distribution of inspectors is uniform: $\lambda^*(a)=1$ for every $a\in[0,1]$.
\end{cor}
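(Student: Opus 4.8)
The plan is to read this off directly from Proposition \ref{prop:optimalDistributionInspectors}. That proposition tells us that any revenue-maximizing distribution $\lambda^*$ satisfies $\phi'(\lambda^*(x)) = \phi'(\lambda^*(y))$ for every pair of points $x,y \in [0,1]$. So the only thing left to do is to turn this equality of derivatives into an equality of the $\lambda^*$ values themselves, and for that I would invoke strict concavity.

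The key step is the observation that if $\phi$ is strictly concave then $\phi'$ is strictly decreasing, hence injective on its domain. First I would state this (it is the standard characterization of strict concavity for a differentiable function). Then, fixing arbitrary $i,j \in [0,1]$, I would apply Proposition \ref{prop:optimalDistributionInspectors} to get $\phi'(\lambda^*(i)) = \phi'(\lambda^*(j))$, and conclude $\lambda^*(i) = \lambda^*(j)$ by injectivity of $\phi'$. Since $i,j$ were arbitrary, $\lambda^*$ is constant. (One could additionally note that, together with the normalization $\int_0^1 \lambda^*(x)\,dx = 1$, this forces $\lambda^*(x) = 1$ for all $x$, i.e.\ the uniform distribution, though the statement as given only asserts constancy.)

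There is essentially no hard part here — the corollary is an immediate logical consequence of the previous proposition plus a one-line convexity fact. The only point one might want to address for completeness is that Proposition \ref{prop:optimalDistributionInspectors} derives the stationarity condition from first-order conditions, so strictly speaking one should check that the critical point is a maximizer rather than a saddle or minimizer; under strict concavity of $\phi$ the objective $\pi$ is a concave functional of $\lambda$ (it is linear in $\phi(\lambda(\cdot))$ with nonnegative weights $\alpha\, d(x,y)/d^{pass}(a) \ge 0$, integrated against the positive demand density, and composition of a concave function with this affine-and-integrate map preserves concavity), so the first-order condition is also sufficient and the uniform distribution is indeed the revenue-maximizer. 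I would include a sentence to that effect but would not belabor the computation, since the concavity of $\pi$ in $\lambda$ is transparent from the expression for $\pi$ already written out in the proof of Proposition \ref{prop:optimalDistributionInspectors}.
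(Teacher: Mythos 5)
Your proposal is correct and matches the derivation the paper intends: the corollary is stated as an immediate consequence of Proposition \ref{prop:optimalDistributionInspectors}, and the step you make explicit (strict concavity implies $\phi'$ is strictly decreasing, hence injective, so equality of derivatives forces equality of $\lambda^*$ values) is exactly the implicit argument. Your additional remark that strict concavity of $\phi$ makes the objective concave in $\lambda$, so the first-order condition is also sufficient, is a sensible completeness check that the paper omits but does not change the approach.
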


\begin{proof}
Under strict concavity, $\phi'$ is injective. Hence Proposition~\ref{prop:optimalDistributionInspectors} implies that any revenue-maximizing distribution must satisfy $\lambda^*(x)=\lambda^*(y)$ for all $x,y$. Since $\int_0^1 \lambda^*(a)da=1$, the only such distribution is $\lambda^*(a)=1$ for every $a\in[0,1]$.
\end{proof}

\begin{cor}
\label{cor:OptimalInspectorsLinear}
If $\phi(\lambda)=k\lambda$ for some constant $k>0$, then any feasible distribution of inspectors $\lambda^*$ with $\int_0^1 \lambda(x)dx=1$ is revenue-equivalent.
\end{cor}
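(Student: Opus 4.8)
The plan is to collapse the revenue functional to a one–dimensional integral of $\phi\bigl(\lambda(\cdot)\bigr)$ and then read off the conclusion from linearity together with the normalization of $\lambda$. First I would recall from the proof of Proposition~\ref{prop:optimalDistributionInspectors} that, for \emph{any} feasible inspector distribution $\lambda$, Proposition~\ref{prop1} guarantees that the revenue-maximizing pricing scheme is $p^*(x,y)=\alpha q(x,y,\lambda)$, and that — after changing the order of integration — the associated revenue can be written as
\[
\pi(\lambda)=\alpha\int_0^1\frac{\phi\bigl(\lambda(a)\bigr)}{d^{pass}(a)}\left[\int_0^a\!\int_a^1 d(x,y)\,dy\,dx+\int_a^1\!\int_0^a d(x,y)\,dy\,dx\right]da .
\]
The bracketed quantity is, by definition, exactly $d^{pass}(a)$, so it cancels the denominator and the revenue simplifies to the clean expression $\pi(\lambda)=\alpha\int_0^1\phi\bigl(\lambda(a)\bigr)\,da$.

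Next I would substitute the linear form $\phi(\lambda)=k\lambda$, obtaining $\pi(\lambda)=\alpha k\int_0^1\lambda(a)\,da$, and invoke the feasibility constraint $\int_0^1\lambda(x)\,dx=1$ to conclude $\pi(\lambda)=\alpha k$. Since this value does not depend on $\lambda$ at all, any two feasible distributions yield the same maximal revenue, i.e., they are revenue-equivalent, which is the claim. I note in passing that the same reduced formula $\pi(\lambda)=\alpha\int_0^1\phi(\lambda(a))\,da$ also delivers Corollary~\ref{cor:OptimalInspectorsStrictlyConcave} immediately: when $\phi$ is strictly concave, Jensen's inequality gives $\int_0^1\phi(\lambda(a))\,da\le\phi\bigl(\int_0^1\lambda(a)\,da\bigr)=\phi(1)$, with equality if and only if $\lambda$ is constant almost everywhere.

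There is no substantive obstacle here — the argument is essentially bookkeeping already carried out for Proposition~\ref{prop:optimalDistributionInspectors}. The one point deserving a remark is that $p^*(x,y)=\alpha q(x,y,\lambda)$ is the revenue-maximizing price only while $q(x,y,\lambda)$ remains a bona fide probability, i.e., lies in $[0,1]$ on every origin–destination pair with positive demand; I would state explicitly that we restrict attention to parameter configurations for which this holds, so that $\lambda\mapsto\pi(\lambda)$ is genuinely the integral of $k\lambda$ and is not truncated. Under that standing assumption the collapse of $\pi(\lambda)$ to the constant $\alpha k$ is immediate, and revenue-equivalence follows.
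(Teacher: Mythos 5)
Your proposal is correct and follows essentially the same route as the paper: the paper leaves the corollary unproved, but its proof of Proposition~\ref{prop:optimalDistributionInspectors} already contains the rearranged triple integral whose inner part is $d^{pass}(a)$, so the revenue collapses to $\alpha\int_0^1\phi(\lambda(a))\,da=\alpha k$ exactly as you write. If anything, your direct evaluation is slightly cleaner than reading the corollary off the first-order condition $\phi'(\lambda(x))=\phi'(\lambda(y))$ (which for linear $\phi$ holds vacuously and by itself only identifies critical points), and your Jensen remark correctly recovers Corollary~\ref{cor:OptimalInspectorsStrictlyConcave} from the same reduced formula.
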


\begin{proof}
If $\phi(\lambda)=k\lambda$, then
\[
\pi=\alpha\int_0^1 k\lambda(a)da=\alpha k,
\]
which is independent of the distribution $\lambda$.
\end{proof}

Corollary~\ref{cor:OptimalInspectorsLinear} implies that if inspection exposure is proportional to the inspector-passenger ratio, revenue maximization imposes no further restriction on how inspectors are distributed along the line.

\section{Networks\label{section:Networks}} 

We now adapt the same logic to a finite connected undirected network $G=(\mathcal{N},E)$. The notation changes accordingly: $x$ and $y$ now denote discrete nodes in $\mathcal{N}$ rather than points on the open interval $(0,1)$; the continuous passenger density $d^{pass}(a)$ is replaced by discrete edge flows $\sigma^{\mathrm{pass}}_e$; and the integral inspection-exposure function $q(x,y,\lambda)$ is replaced by a per-edge ratio $Q_e(\lambda)$. The economic logic, however, carries over directly. The environment again contains a positive mass of nonatomic passengers, so each individual rider takes aggregate traffic conditions as given. For each origin $x\in \mathcal{N}$ and destination $y\in \mathcal{N}$, there is exogenous demand $D_{(x,y)}\ge 0$. As in the line model, demand is treated as fixed so that the object of interest is truthful payment conditional on travel.

What changes relative to the line model is that a trip need not be associated with a unique route. A passenger traveling from $x$ to $y$ may have several feasible paths available. A path is a finite ordered list of nodes
\[
\sigma=\left(z^0,z^1,\ldots,z^m\right)
\]
with $z^0=x$, $z^m=y$, and $(z^{h-1},z^h)\in E$ for every $h=1,\ldots,m$. We write $\mathcal{S}(\sigma)$ for the ordered multiset of traveled edges. By the same reasoning as in the line model, cycles are weakly dominated when prices and fines are nonnegative, so we may restrict attention to routes with no repeated nodes. We call such a path a \textbf{feasible route} for the origin-destination pair $(x,y)$.

The transportation authority does not physically control access. Instead, it offers tickets for trips between origins and destinations. In the network model, the primitive price object is therefore not an edge price but a \textbf{route ticket}: for every feasible route $\sigma$, let $P(\sigma)\in\mathbb{R}_+$ denote the fare for buying full coverage of that route. A passenger may either buy the full ticket for the route she uses, buy a collection of tickets for subpaths, or leave some edges uncovered.

To define inspection exposure, let $\sigma^{\mathrm{pass}}_{e}$ denote the mass of passengers who traverse edge $e\in E$ in either direction. To avoid undefined expressions on edges with zero strategic flow, we introduce a tremble parameter $\varepsilon>0$, interpreted as an arbitrarily small background flow on every edge. If inspectors are distributed across edges according to $\lambda=(\lambda_e)_{e\in E}$, with $\sum_{e\in E}\lambda_e=1$, then the expected number of inspections on edge $e$ is
\begin{equation}
  Q_{e}(\lambda)=\frac{\phi\left(\lambda_e\right)}{\sigma^{\mathrm{pass}}_{e}+\varepsilon}.
\end{equation}
As before, $\phi$ is increasing, differentiable, and concave, and the formula reflects the same inspection technology: each inspection event on an edge selects one passenger uniformly at random. The parameter $\varepsilon$ is purely a regularization device ensuring that exposure is well defined even when an edge receives zero strategic flow.

Given a path $\sigma$, the expected fine from traveling uncovered on all of its edges is
\[
\alpha\sum_{e\in \mathcal{S}(\sigma)} Q_{e}(\lambda).
\]
More generally, if a passenger covers some subpaths and leaves others uncovered, her expected utility is travel utility minus ticket payments minus expected fines on the uncovered edges. Because passengers are nonatomic, no individual rider affects aggregate edge flows, so each rider treats the collection $(Q_e)_{e\in E}$ as given.

The appropriate definition of incentive compatibility is therefore parallel to the line model. A route-ticket pricing scheme is incentive-compatible if, given aggregate edge exposures, almost every passenger finds it optimal to buy the full ticket for the route she uses rather than substitute uncovered travel or a cheaper collection of subpath tickets.

\begin{lem}
\label{lem:simplePathNetwork}
Fix any route-ticket pricing scheme and any inspection exposure function on edges. For a passenger traveling from $x$ to $y$ in a network, every path that contains a cycle is weakly dominated by a route from $x$ to $y$ with no repeated nodes. The domination is strict whenever the original path traverses some edge more than once.
\end{lem}

\begin{proof}
Any path that contains a cycle can be shortened by deleting the cycle while keeping the same origin and destination. Removing a cycle weakly decreases ticket expenditure and weakly decreases expected fines, because both are nonnegative on every traveled edge. Travel utility remains $u_i(x,y)$ because origin and destination do not change. Repeating this argument until no cycle remains yields a route with no repeated nodes that weakly dominates the original path. The domination is strict whenever some edge is removed from the trip.
\end{proof}

\begin{prop}
\label{prop:existenceofpricenetwork}
For any finite connected network $G=(\mathcal{N},E)$, any monitor allocation $\lambda$ with $\lambda_e>0$ for every edge $e\in E$, and any tremble $\varepsilon>0$, there exists an incentive-compatible route-ticket pricing scheme. Moreover, this pricing scheme admits an additive representation over edges: for every feasible route $\sigma$,
\begin{equation}
\label{eq:ICPriceInNerwork}
P^{*,\varepsilon}(\sigma)=\alpha\sum_{e\in \mathcal{S}(\sigma)} Q_e(\lambda).
\end{equation}
\end{prop}

\begin{proof}
The proof proceeds in two steps. We construct an auxiliary game in which passengers' only decision is which route to take. Prices are not fixed exogenously but are determined endogenously by the aggregate route flows via equation~\eqref{eq:ICPriceInNerwork}: each route's fare equals the sum of its edge-level inspection exposures, which depend on how many passengers use each edge. A passenger who selects a route simply pays that route's price; no alternative ticket strategy is available in the auxiliary game. Part~(b) shows, using \cite{schmeidler1973equilibrium}, that this game has a Nash equilibrium. That equilibrium determines the pricing schedule~\eqref{eq:ICPriceInNerwork}, and the equilibrium condition guarantees that no passenger wants to switch routes given the induced prices. Part~(a) then verifies that, given those prices, no passenger can reduce her cost by altering her ticket purchases on her chosen route: any combination of partial coverage and uncovered travel incurs the same expected total cost as the full ticket.  Together, the two steps deliver both an incentive-compatible pricing schedule and the equilibrium route pattern that supports it.

\medskip\noindent\textit{Part~(a): optimality of the full ticket conditional on a path.}
Fix any aggregate edge flows and define the route-ticket price of path $\sigma$ by equation~\eqref{eq:ICPriceInNerwork}. Take any passenger following path $\sigma$. If she buys the full ticket $P^{*,\varepsilon}(\sigma)$, her payment equals $\alpha\sum_{e\in\mathcal{S}(\sigma)} Q_e(\lambda)$. If instead she covers a subset of edges and leaves the rest uncovered, her total expected cost---ticket payments plus expected fines---is exactly the same, because both components add edge by edge. Hence, conditional on a path, buying the full ticket is optimal.

\medskip\noindent\textit{Part~(b): existence of equilibrium route flows.}
We formulate route choice as a nonatomic game. For each origin-destination pair $(x,y)$ with $D_{(x,y)}>0$, there is a continuum of passengers of mass $D_{(x,y)}$. The player space $(\mathcal{I},\mathcal{A},\mu)$ is the disjoint union of these continua, where $\mathcal{A}$ is the induced $\sigma$-algebra on $\mathcal{I}$ and $\mu$ is the measure that assigns mass $D_{(x,y)}$ to the $(x,y)$-group. Because each group is nonatomic and there are finitely many groups, $\mu$ is a finite nonatomic measure.

Let $\Sigma$ denote the set of all feasible routes in~$G$. Because $G$ is finite, $\Sigma$ is finite. Every passenger chooses an element of $\Sigma$.

A strategy profile is a measurable function $s:\mathcal{I}\to\Sigma$. The induced aggregate route frequencies $f_\sigma=\mu(\{i:s(i)=\sigma\})$ determine edge flows $\sigma^{\mathrm{pass}}_e=\sum_{\sigma:\,e\in\mathcal{S}(\sigma)} f_\sigma$. The payoff to a passenger with origin-destination $(x,y)$ from choosing a feasible route $\sigma$ connecting $x$ to $y$ is
\[
u_i(x,y)-\alpha\sum_{e\in\mathcal{S}(\sigma)} Q_e(\lambda)
=u_i(x,y)-\alpha\sum_{e\in\mathcal{S}(\sigma)} \frac{\phi(\lambda_e)}{\sigma^{\mathrm{pass}}_e+\varepsilon}.
\]
For a passenger with origin-destination $(x,y)$, any route in $\Sigma$ that does not connect $x$ to $y$ is assigned a payoff $-M_{xy}$, where $M_{xy}>0$ is chosen large enough that every such route yields strictly lower payoff than any feasible route connecting $x$ to $y$. This is possible because $\Sigma$ is finite and, with $\varepsilon>0$, every payoff from a feasible route is finite.
Because $\Sigma$ is finite, the distribution of strategies is characterized by the vector $(f_\sigma)_{\sigma\in\Sigma}\in\mathbb{R}^{|\Sigma|}$. Edge flows are affine functions of this vector, and since $\varepsilon>0$, each $Q_e(\lambda)=\phi(\lambda_e)/(\sigma^{\mathrm{pass}}_e+\varepsilon)$ is continuous in edge flows. Every passenger's payoff is therefore a continuous function of the strategy distribution. Finally, $u_i(x,y)$ depends on $i$ only through the origin-destination pair, and the partition of~$\mathcal{I}$ into OD groups is measurable, so the payoff function is measurable with respect to $\mathcal{A}$ in~$i$ for every fixed strategy distribution.

In summary, the game has a finite action set, a nonatomic player space, payoffs that are continuous in the strategy distribution, and payoffs that are measurable in the player index. The main theorem of \cite{schmeidler1973equilibrium} therefore guarantees the existence of a pure-strategy Nash equilibrium: a measurable $s^*:\mathcal{I}\to\Sigma$ such that $\mu$-almost every passenger chooses a payoff-maximizing path given the aggregate flows induced by~$s^*$.\footnote{That is, except possibly for a set of passengers with $\mu$-measure zero.} Part~(a) then implies that, at those equilibrium flows, equation~\eqref{eq:ICPriceInNerwork} defines a pricing scheme under which every passenger who uses a route finds the full ticket for that route optimal. This establishes the existence of an incentive-compatible route-ticket pricing scheme with the claimed additive edge representation.

\end{proof}

Proposition~\ref{prop:existenceofpricenetwork} is the network analog of Proposition~\ref{prop1}. The tremble $\varepsilon>0$ is a technical device used to establish existence: it ensures that payoffs are continuous in the strategy distribution, which is needed for the fixed-point argument. Once equilibrium edge flows are known to be strictly positive, the regularization plays no further role and can be removed.\footnote{The per-edge price $Q_e=\phi(\lambda_e)/(\sigma_e^{\mathrm{pass}}+\varepsilon)$ is \emph{decreasing} in edge flow: more passengers on an edge lower the price each pays. This is the opposite of a congestion externality. The resulting positive externality means concentrated flow is self-reinforcing, so the pure-strategy equilibria guaranteed by Schmeidler's theorem are \emph{corner} equilibria in which each OD pair sends all demand along a single route. Starting from an initial assignment (e.g.\ shortest paths), the implied prices can be computed and OD pairs reassigned to their cheapest route; this process converges, computationally, in very few iterations.}

\begin{cor}\label{cor:epsilonzero}
For any equilibrium of the $\varepsilon$-perturbed game, let $\Sigma^*\subseteq\Sigma$ denote the set of routes that carry strictly positive flow. As $\varepsilon\to 0$, the incentive-compatible prices $P^{*,\varepsilon}(\sigma)$ converge to $P^{*,0}(\sigma)=\alpha\sum_{e\in\mathcal{S}(\sigma)}\phi(\lambda_e)/\sigma_e^{\mathrm{pass}}$ for every $\sigma\in\Sigma^*$, and $P^{*,0}$ is incentive-compatible at the limiting flows. For routes not in $\Sigma^*$, at least one edge carries zero limiting flow, so $P^{*,\varepsilon}(\sigma)\to+\infty$; since incentive-compatible prices are upper bounds on fares, a diverging upper bound means that any finite price for such a route satisfies incentive compatibility.
\end{cor}

\begin{proof}
Let $(\sigma_e^{\mathrm{pass},\varepsilon})_{e\in E}$ be equilibrium edge flows for a given $\varepsilon>0$. These flows lie in a compact set (each is bounded by total demand), so some subsequence converges to a limit $(\sigma_e^{\mathrm{pass},0})_{e\in E}$. For any route $\sigma\in\Sigma^*$, every edge in $\mathcal{S}(\sigma)$ carries strictly positive limiting flow, so $\phi(\lambda_e)/(\sigma_e^{\mathrm{pass},\varepsilon}+\varepsilon)\to\phi(\lambda_e)/\sigma_e^{\mathrm{pass},0}$ and the route-ticket prices converge. Part~(a) of the proof of Proposition~\ref{prop:existenceofpricenetwork} uses only the additive structure of prices and fines and does not depend on~$\varepsilon$, so $P^{*,0}$ is incentive-compatible at the limiting flows. For any route $\sigma\notin\Sigma^*$, some edge $e\in\mathcal{S}(\sigma)$ has $\sigma_e^{\mathrm{pass},0}=0$, so $P^{*,\varepsilon}(\sigma)\to+\infty$. Because incentive-compatible prices are upper bounds on what can be charged, a diverging upper bound means that any finite fare for such a route satisfies the IC condition.
\end{proof}

In the application below, the canonical-route assignment produces strictly positive flow on every edge in the network, so the unperturbed formula $P^{*,0}$ applies directly.

\section{Application: the Washington DC Metro\label{section::application}}

The Washington DC Metro is a useful test case for several reasons. It is a large, multi-line network with shared segments where different routes overlap, which is exactly the structure that makes incentive-compatible pricing non-trivial. Origin-destination ridership counts and fare tables are available, making it possible to calibrate the model with real traffic and pricing data.

In this section, we use these data as an example of the impact that our incentive-compatible pricing scheme could have in a real-life public transport system. For each origin-destination pair, we allow all feasible routes in the network. The objective is not to estimate a full demand-and-routing system, but to evaluate how large the revenue consequences of incentive-aware pricing can be under proof-of-payment when prices and strategic deviations are computed at route-equilibrium flows.

Our objectives are as follows. First, we calibrate the fine $\alpha$ and the resulting incentive-compatible prices by requiring revenue neutrality: we choose $\alpha$ so that the total revenue the DC Metro would collect under proof-of-payment with incentive-compatible prices exactly equals the revenue currently obtained under the gated pricing schedule. This requirement pins down $\alpha$ and, through equation~\eqref{eq:ICPriceInNerwork}, the entire IC price schedule. The revenue-neutral $\alpha$ then serves as a natural benchmark: it is the fine level at which switching from barrier-based enforcement to proof-of-payment, while adopting IC prices, leaves total revenue unchanged. We check whether the implied fine is economically reasonable given published fine levels.

Second, we perform counterfactual exercises which evaluate the impact on fare evasion and revenue that would result from switching from the current barrier-based system to a proof-of-payment system with random inspection while keeping the 2019 prices.

Third, we show how adjusting the current pricing scheme via the use of incentive-compatible prices could substantially increase revenue, by reducing some of the prices at some of the segments of the network.

\subsection{Data}

We use two datasets in our analysis. One is a station-to-station passengers count for the month of May 2012.\footnote{Source: Washington Metropolitan Area Transit Authority.} It contains, for every pair of metro stations, the average daily number of passengers traveling from one to the other. These are separated into four parts of the day: AM peak (opening to 9:30am), Midday (9:30am to 3:00pm), PM peak (3:00pm to 7:00pm) and Evening (7:00pm to midnight). We manually replicated the network structure of the subway system.

The second is the origin-destination pricing, as of April 2019. The DC Metro uses a pricing scheme in which the cost of the ticket depends on the station of departure and of destination, and also on whether it takes place at Peak or Off-Peak times. Prices varied from US\$2.00 to US\$6.00, and there were 76 different prices depending on the origin, destination, and time.

In our analysis, we use the 2012 traffic data together with the 2019 pricing information.\footnote{The reason for using this combination was data availability: we only had the traffic data from May 2012, but no past pricing data.} Since the objective of our simulation exercise is to evaluate the model with realistic values, this combination fits our purposes, despite not delivering exact prices for 2012 or 2019. In addition to these, we need values for two more parameters in our model: $\alpha$ (the value of the fine charged to passengers caught with an incorrect ticket) and $\lambda$ (the total mass of inspectors).

For the total mass of inspectors, we consider the case of Berlin public transport. Berlin also has a large-scale public transport system, but differently from the DC Metro, it uses a proof-of-payment system, as in our model. In 2018, Berlin had 120 inspectors and 2.9 million passengers per day.\footnote{Sources: \url{http://www.exberliner.com/features/zeitgeist/controllers-out-of-control} and \url{http://www.bvg.de/en/Company/Profile/Structure--facts}} The DC Metro, in our data, has 724,156 passengers per day. Therefore, setting the number of inspectors in DC to 30 would make the proportion of inspectors per passenger approximately the same.

Based on published travel times, it takes about 70 minutes to ride 26 stations in a line in the DC Metro, or roughly 2.69 minutes per station on average. If we consider the four periods in the data as equally distributed, each period covers 6 hours (360 minutes), so each inspector can potentially inspect 134 stations per period. The DC Metro network has 88 edges between stations, meaning each inspector could in principle pass 1.52 times through each edge during a period under uniform distribution. If we have 30 inspectors, then each edge is inspected by 45.60 inspectors over the six-hour period. The total mass of inspectors to be distributed is, therefore, $\lambda^{Total}=\sum_{(x,y) \in E}\lambda_{(x,y)}=4,013$.

By using the number of passengers traveling from each origin and destination and the price for those trips, we can calculate what would be the revenue that the DC Metro obtained, on average, in May, during each period of the day. Table~\ref{table:TrafficAndRevenue2019Prices} shows the results of these calculations. After excluding origin-destination pairs without a matching fare record---mainly same-station rows---the cleaned input layer covers 86 stations, 88 network edges, and 7,310 priced OD pairs. We see that 724,156 subway trips are made on average per day, resulting in a daily revenue of $\pi^{Total}=\text{US\$}2{,}353{,}948$.

\begin{table}[t]
    \centering
    \begin{tabular}{l|c|c|c|c|c}
     &  \textbf{AM Peak} & \textbf{Midday} & \textbf{PM Peak} & \textbf{Evening} & \textbf{Total}\\
     \hline
     \hline
     
   Total Traffic  & 235,150 & 141,498 & 257,550 & 89,958 & 724,156\\
   Revenue & \$845,138 & \$373,262 & \$890,785 & \$244,764 & \$2,353,948 \\
    \hline
   \hline 
\end{tabular}
    \caption{Passenger traffic and revenue under the 2019 pricing scheme}
    \label{table:TrafficAndRevenue2019Prices}
\end{table}

The origin-destination data gives us the number of passengers who travel from each origin to each destination. In order to produce the incentive-compatible prices, however, we need edge flows rather than only OD totals. We therefore enumerate, for each origin-destination pair, all feasible routes and assign each OD pair to the cheapest route-and-ticket strategy. As noted in the Remark following Proposition~\ref{prop:existenceofpricenetwork}, the incentive-compatible edge price $Q_e$ is decreasing in edge flow, so concentrated flow is self-reinforcing and the equilibrium is a corner solution in which each OD pair uses a single route. To compute it, we assign every OD pair to the shortest route, aggregate edge flows, compute the implied prices, and reassign any OD pair that would prefer to switch. Because concentrated flow is self-reinforcing, this iterative verification converges within a few iterations. The two monitoring configurations used to compute edge prices are described in Section~\ref{subsec:ICPrices} below. Whenever two routes are tied in distance or total route-and-ticket cost, the tie is broken by a fixed exogenous priority ordering over edges: among the tied routes, we select the one whose highest-priority edge ranks first under that ordering.

\subsection{Fine for fare evasion}

We now derive the revenue-neutral value of $\alpha$. We set $\phi(\lambda)=\lambda$, so the expected number of inspections on an edge equals the inspector-to-passenger ratio on that edge. Under this assumption, the IC price for edge $(x,y)$ is

\begin{equation}
P_{(x,y)}=\frac{\lambda_{(x,y)}}{\sigma_{(x,y)}^{pass}}\alpha.
\end{equation}

The revenue collected from passengers traversing edge $(x,y)$ is therefore $\pi_{(x,y)}=P_{(x,y)}\sigma_{(x,y)}^{pass}=\alpha\lambda_{(x,y)}$: the edge flows cancel and per-edge revenue depends only on the inspector allocation. Notice that \textit{fines are not considered revenue}, but enter only as parameters of the IC price formula. Summing over all edges,

\begin{equation}
\pi^{Total}=\sum_{(x,y)\in E}\pi_{(x,y)}=\alpha\sum_{(x,y)\in E}\lambda_{(x,y)}=\alpha\lambda^{Total},
\end{equation}
and therefore $\alpha=\pi^{Total}/\lambda^{Total}$. The revenue-neutral fine depends only on total revenue and total inspector mass.

Applying this formula period by period, using the per-period revenues from Table~\ref{table:TrafficAndRevenue2019Prices} and $\lambda^{Total}=4{,}013$, we obtain $\alpha=\text{US\$}210.60$, US\$93.01, US\$221.97, and US\$60.99 for the AM Peak, Midday, PM Peak, and Evening periods, respectively. Interestingly, the size of the fine required is not too much different from the one actually charged under the DC system. Before decriminalization of fare evasion in May 2019, police could issue criminal citations up to US\$300 and even jail people for 10 days. Under the new law, it becomes a civil penalty with a maximum fine of US\$50.\footnote{D.C. Law 22-310. Fare Evasion Decriminalization Amendment Act of 2018: \url{https://code.dccouncil.us/us/dc/council/laws/22-310}}

\subsection{Incentive-compatible prices}
\label{subsec:ICPrices}

Given these values for $\alpha$, we can then go back to Equation~\eqref{eq:ICPriceInNerwork} and obtain the additive edge components of the incentive-compatible prices. We consider two configurations regarding the distribution of inspectors in the network. 

The first is the \textbf{uniform distribution} case, in which inspectors are uniformly distributed across the system. That is, for every edge $(x,y) \in E$, $\lambda_{(x,y)}=\bar{\lambda}$. Since we set $\phi(\lambda)=\lambda$, Corollary~\ref{cor:OptimalInspectorsLinear} implies that every inspector distribution yields the same revenue; uniform is therefore one of many revenue-equivalent configurations. The ticket price for an edge $(x,y)$ is:

\begin{equation}
P_{(x,y)}=\frac{\bar{\lambda}}{\lambda^{Total}} \frac{\pi^{Total}}{\sigma_{(x,y)}^{pass}}
\end{equation}

The second configuration that we considered was the \textbf{proportional distribution}, where the mass of inspectors in an edge $(x,y)$ of the network is proportional to the total flow of passengers passing by it. That is:

\begin{equation}
\lambda_{(x,y)}=\frac{\lambda^{Total}\sigma^{pass}_{(x,y)}}{\sigma^{Total}}\text{, where }\sigma^{Total}=\sum_{(x,y)\in E}\sigma^{pass}_{(x,y)}
\end{equation}

But then, the ticket prices for an edge $(x,y)$ under $\alpha=\pi^{Total}/\lambda^{Total}$ become:

\begin{equation}
P_{(x,y)}=\frac{\lambda^{Total}\frac{\sigma_{(x,y)}^{pass}}{\sigma^{Total}}}{\lambda^{Total}} \frac{\pi^{Total}}{\sigma_{(x,y)}^{pass}}=\frac{\sigma_{(x,y)}^{pass}}{\sigma^{Total}} \frac{\pi^{Total}}{\sigma_{(x,y)}^{pass}}=\frac{\pi^{Total}}{\sigma^{Total}}
\end{equation}

That is, with proportional monitoring, the incentive-compatible price is the same for each edge. 

In the application, each origin-destination pair is assigned its feasible routes and prices are computed from the resulting route-equilibrium flows. Under uniform monitoring, the route-ticket price for an OD pair is the minimum additive cost among its feasible routes under the equilibrium edge penalties. Under proportional monitoring every edge carries the same price within a period, so the route-ticket price depends only on route length. This yields a tighter IC-price distribution under proportional monitoring, whereas uniform monitoring produces more dispersion because inspection exposure varies with the equilibrium distribution of passengers across edges.

\begin{table}[t]
    \centering
    \begin{tabular}{l|c|c|c|c}
     &  \textbf{AM Peak} & \textbf{Midday} & \textbf{PM Peak} & \textbf{Evening} \\
     \hline
     \hline
   \multicolumn{5}{c}{\textbf{DC Metro 2019 pricing}}     \\
   \hline
   Minimum ticket price & \$2.25 & \$2.00 & \$2.25 & \$2.00 \\
   Median ticket price & \$4.20 & \$3.40 & \$4.20 & \$3.40 \\
   Maximum ticket price & \$6.00 & \$3.85 & \$6.00 & \$3.85 \\
   \hline
   \multicolumn{5}{c}{\textbf{Incentive-compatible pricing - Uniform monitoring}}\\
   \hline
  Minimum ticket price & \$0.18 & \$0.12 & \$0.17 & \$0.14 \\
  Median ticket price & \$4.30 & \$3.82 & \$4.46 & \$3.66 \\
  Maximum ticket price & \$17.45 & \$16.79 & \$18.25 & \$15.06 \\
   \hline
   
  Trips where $IC<DC$ & 68.48\% & 67.68\% & 69.12\% & 68.47\% \\ 
   \hline
   \hline
   \multicolumn{5}{c}{\textbf{Incentive-compatible pricing - Proportional monitoring}}\\
   \hline
   Minimum ticket price & \$0.44 & \$0.38 & \$0.45 & \$0.36 \\
   Median ticket price & \$4.82 & \$4.15 & \$4.93 & \$3.99 \\
   Maximum ticket price & \$11.83 & \$10.18 & \$12.09 & \$9.79 \\
   \hline
   
   Trips where $IC<DC$ & 54.13\% & 55.82\% & 55.18\% & 55.47\% \\ 

   \hline
   \hline 
\end{tabular}
    \caption{DC Metro 2019 and incentive-compatible (IC) prices under route-equilibrium flows.}
    \label{table:IncentiveCompatiblePrices}
\end{table}

  Some summary statistics from both the DC Metro 2019 pricing scheme and the incentive-compatible prices are shown in Table~\ref{table:IncentiveCompatiblePrices}. The central tendency of fares is broadly similar across the two systems, but the range of prices differs substantially. While the highest difference between the cost of two tickets under the 2019 DC pricing scheme is US\$4.00, for the incentive-compatible prices that difference jumps to more than US\$18.00.

  Under uniform monitoring, the median IC price lies close to the median DC fare in every period, but a larger share of trips becomes cheaper than under the current fare system than under proportional monitoring. Under proportional monitoring, by contrast, the IC prices are more tightly concentrated because every traversed edge carries the same price within a period.

Here it is important to remember the meaning of incentive-compatible prices: they are the highest prices that guarantee that passengers will pay the full fare. Therefore, any price below those still guarantees that all passengers will pay. Perhaps the main reason why these values can be as high as more than US\$18.00 is that some prices have to be very low. The reason for this is intuitive: if a passenger wants to make a short and busy trip, say from one station to the next, the likelihood that she will face an inspector is relatively low. Therefore, if the price is not low as well, a ``rational'' passenger will prefer to take the risk. Indeed, if the traffic pattern is such that many passengers make short trips, a random checks system will require a high value for $\alpha$ in order to have incentive-compatible prices that yield good revenues.

While some incentive-compatible prices are very high, under them most passengers would pay less than under the current DC Metro pricing. As shown in Table~\ref{table:IncentiveCompatiblePrices}, the majority of trips would be cheaper under the incentive-compatible prices in every period, especially under uniform monitoring. These prices should be interpreted as upper bounds in the following sense: they are the highest fares consistent with riders weakly preferring full payment to any underpayment strategy available on their feasible routes. Any uniformly lower fare schedule would also preserve incentive compatibility, though at the cost of lower revenue.

Figure~\ref{fig:priceRatioMaps} provides a visual summary of the relationship between incentive-compatible and DC~2019 prices. Because DC prices are set per origin-destination pair rather than per edge, there is no single ``DC price of an edge.'' To produce an edge-level comparison we proceed as follows. For each equilibrium-assigned trip from $a$ to $b$, we compare the full-trip IC price $P_{(a,b)}$ to the DC price $P^{DC}_{(a,b)}$ and record whether the passenger pays less under IC pricing. Each edge in the trip's route then receives that passenger's indicator. Each edge is colored by the passenger-weighted share of all equilibrium trips traversing it for which IC pricing is cheaper, aggregated over the four time periods within the corresponding monitoring scenario. A blue edge therefore indicates that the majority of passengers using that edge pay less under IC pricing; a red edge indicates that the majority pay more. Edge thickness reflects the scenario-specific equilibrium ridership on that edge. The two panels look strikingly different. Panel~(a) is predominantly blue: under uniform monitoring, inspectors are spread evenly, so heavily used central edges have very low per-passenger inspection exposure and therefore very low IC prices; most passengers who travel through those edges pay less than under the DC schedule, giving the map an overall blue cast. Panel~(b) is predominantly red: under proportional monitoring every edge carries the same per-edge IC price within a period, so a trip's total IC fare grows in proportion to its number of edges. The DC fare schedule also increases with distance, but with a compressed range of \$2.00--\$6.00, it rises much less steeply than a strictly additive per-edge charge; as a result, most medium-to-long routes cost more under IC than under DC. Under both monitoring scenarios, the central core of the network tends to be bluer than the periphery: far-out edges carry little traffic, so the per-passenger inspection exposure is high and the IC price on those edges is elevated, pushing IC fares above their DC counterparts for trips that traverse them. Because each long route traverses many edges, such trips contribute their ``more expensive'' flag to every edge along their path, making the proportional-monitoring map appear predominantly red even though the majority of passengers pay less overall (Table~\ref{table:IncentiveCompatiblePrices}); the trip-level share counts each passenger once regardless of route length, while the edge-level share counts each $k$-edge trip $k$ times. Under uniform monitoring, western edges tend to be bluer than eastern ones, reflecting the heavier commuter traffic on the Virginia side of the network.

\begin{figure}[t]
  \centering
  \includegraphics[width=\textwidth]{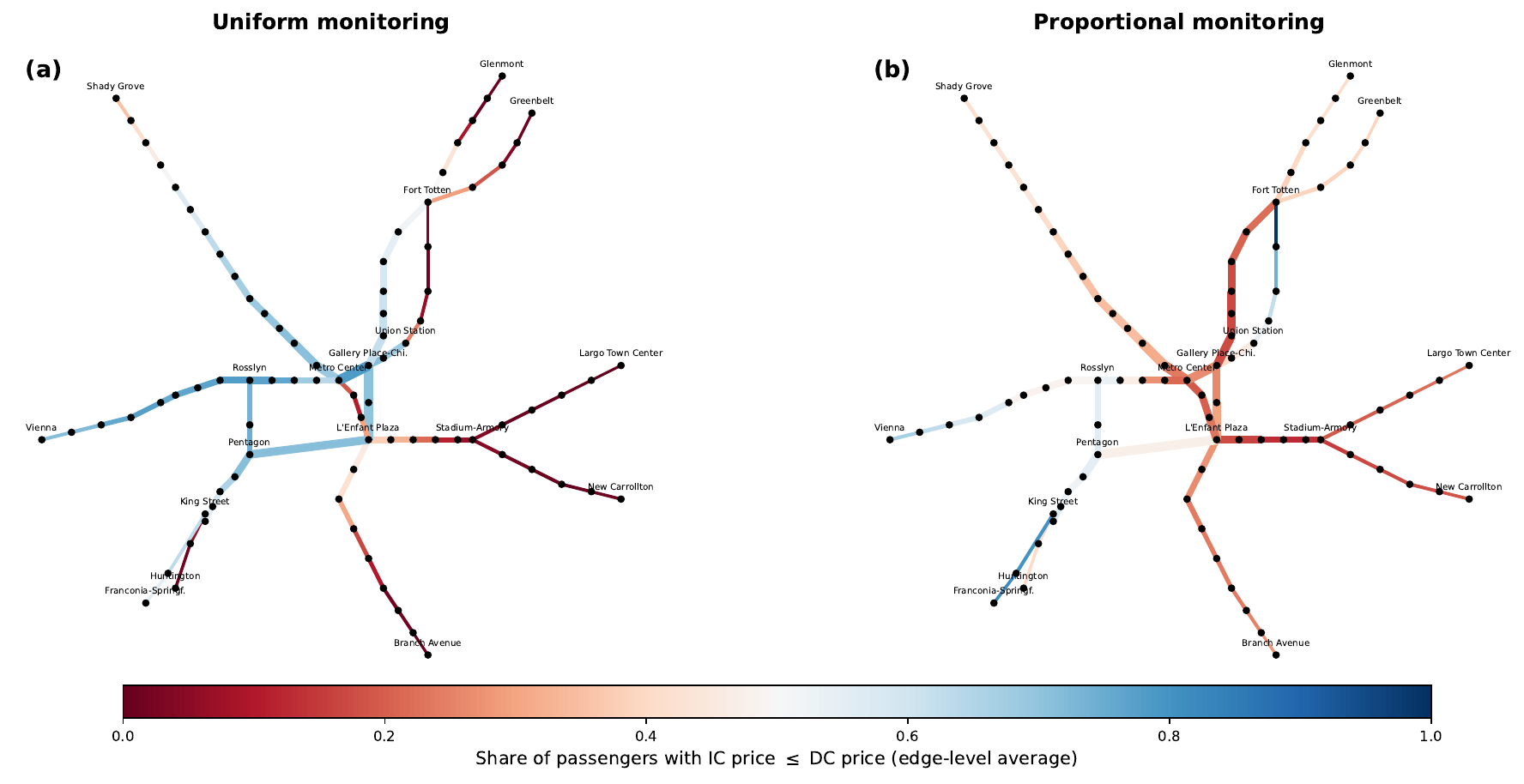}
  \caption{Schematic map of the DC Metro with edges colored by the passenger-weighted share of trips traversing that edge for which the IC price is at or below the DC~2019 price, under the route-equilibrium assignment for each monitoring scenario. Panel~(a) shows uniform monitoring and panel~(b) shows proportional monitoring. Blue edges (share above 0.5) indicate that the majority of passengers on that edge pay less under IC pricing; red edges (share below 0.5) indicate that the majority pay more. Both panels use the same color scale. Edge thickness is proportional to the equilibrium ridership on that edge in the corresponding panel.}
  \label{fig:priceRatioMaps}
\end{figure}

\subsection{DC prices under proof-of-payment system}
\label{subsec:DCPricesStrategicPassengers}

We next simulate a counterfactual in which the DC Metro switches to a proof-of-payment system with random inspection but keeps the 2019 prices. In other words, the authority changes the enforcement method but does not adapt the pricing to consider the incentives that they induce. The difference between these simulations and the 2019 revenue values can be seen as the impact that an evaluation of rider incentives has on the design of prices for public transport under random inspections.

\begin{table}[t]
  \centering
  \small
  \setlength{\tabcolsep}{4pt}
  \resizebox{\textwidth}{!}{%
  \begin{tabular}{l|c|c|c|c|c}
     &  \textbf{AM Peak} & \textbf{Midday} & \textbf{PM Peak} & \textbf{Evening} & \textbf{Total}\\
     \hline
     \hline
   \multicolumn{6}{c}{\textbf{Random inspections under DC 2019 Metro pricing - Uniform monitoring}}\\
   \hline 
  Partially Paid Trips & 4.23\% & 3.50\% & 4.27\% & 4.07\% & 4.08\% \\
  Trips without tickets & 68.19\% & 67.29\% & 68.45\% & 68.08\% & 68.09\% \\
  Fully Paid Trips  & 27.58\% & 29.21\% & 27.27\% & 27.85\% & 27.82\% \\
  Revenue  & \$314,348 & \$151,031 & \$342,020 & \$93,985 & \$901,384\\
  Losses due to partial tickets & \$6,528 & \$2,733 & \$7,130 & \$2,231 & \$18,622\\
  Losses due to no ticket purchased & \$524,262 & \$219,498 & \$541,634 & \$148,548 & \$1,433,942\\
   \hline
   
   \multicolumn{6}{c}{\textbf{Random inspections under DC 2019 Metro pricing - Proportional monitoring}}\\
   \hline
  Partially Paid Trips & 27.12\% & 14.47\% & 24.56\% & 13.87\% & 22.09\% \\
   Trips without tickets & 48.54\% & 53.14\% & 50.72\% & 54.18\% & 50.91\% \\
  Fully Paid Trips  & 24.34\% & 32.39\% & 24.72\% & 31.94\% & 26.99\% \\
  Revenue  & \$413,806 & \$188,348 & \$429,495 & \$118,670 & \$1,150,319\\
  Losses due to partial tickets & \$85,537 & \$15,781 & \$83,496 & \$9,625 & \$194,438\\
   Losses due to no ticket purchased & \$345,795 & \$169,133 & \$377,794 & \$116,469 & \$1,009,191\\
   \hline
   \hline 
\end{tabular}
  }
    \caption{Counterfactuals under DC pricing and random inspection with route-aware passenger deviations.}
    \label{table:Counterfactuals}
\end{table}

  Each passenger traveling from origin $a$ to destination~$b$ now compares all routes available for that OD pair and, on each route, solves the same cost-minimization problem via dynamic programming: she considers every way to partition that route into contiguous subpaths, choosing for each subpath whether to buy a ticket or ride uncovered, and selects the combination that minimizes her total expected cost (ticket prices plus expected fines for uncovered edges). The passenger then chooses the cheapest route-and-ticket strategy across routes: buying no ticket, buying one ticket for the full trip, or buying any collection of partial tickets on any admissible route. The values of $\alpha$ that we use are the ones calibrated in Section~\ref{subsec:ICPrices}, which guarantee that incentive-compatible prices yield the same revenue as the 2019 gated fares. Recall that the per-edge IC price $P_{(x,y)}=\alpha Q_{(x,y)}$ is also the expected fine from riding that edge uncovered: if the DC fare for a trip exceeds the sum of its per-edge IC prices, the passenger is better off evading.


To illustrate, consider a passenger traveling from Huntington, at the southern end of the Yellow Line, to L'Enfant Plaza during the AM Peak (8 edges, DC fare US\$3.80). Riding the full route uncovered would cost US\$4.67 in expected fines---exceeding the DC fare---so full evasion is not attractive. The optimal strategy is instead to buy a ticket for the first seven edges only (Huntington to Pentagon, DC price US\$3.25) and then ride the last, busy edge (Pentagon to L'Enfant Plaza) uncovered, paying an expected fine of US\$0.22. The total cost is US\$3.47, saving US\$0.33 relative to the full fare. This illustrates the partial-evasion logic: whenever a short, high-traffic stretch has very low inspection exposure, the cheapest strategy combines a ticket for the expensive portion with free-riding on the cheap one.

Table~\ref{table:Counterfactuals} shows the aggregate impact of using the 2019 DC Metro pricing with random inspections, for both uniform and proportional monitoring. In both cases, the amount of fare evasion and its impact on revenue are remarkably high. Under uniform monitoring, more than 68\% of trips are made without buying any ticket. Combined with some loss due to partial tickets, this results in a loss of about US\$1.45 million per day, or roughly 62\% compared with revenues when all passengers pay their full fares. Under proportional monitoring, the loss is about 51\% of benchmark revenue, which is lower than under uniform monitoring but still remarkably high. The fact that the two monitoring scenarios produce different revenue losses may seem to conflict with Corollary~\ref{cor:OptimalInspectorsLinear}, which establishes revenue equivalence across inspector distributions. The key is that Corollary~\ref{cor:OptimalInspectorsLinear} applies to incentive-compatible prices, where additivity ensures that every rider pays regardless of how inspectors are allocated. DC prices lack that additivity: whether a given trip is profitable to evade depends on the relationship between its fare and the expected fine on each edge, so changing the inspector distribution changes which trips are evaded and by how much. Appendix Tables~\ref{table:CounterfactualsThresholdUniform} and~\ref{table:CounterfactualsThresholdProportional} summarize analogous exercises when passengers deviate only if the reduction in ticket payments is at least US\$0.50, US\$1.00, US\$2.00, US\$3.00, US\$4.00, or US\$5.00.

\subsection{Incentives-adjusted DC prices}
\label{subsec:ICAdjustedDCPrices}

The results in Section~\ref{subsec:DCPricesStrategicPassengers} indicate that the impact on revenue that would result from switching to a proof-of-payment scheme with random inspections, without taking incentives into consideration, is potentially very high. One alternative that is available would be to use instead the incentive-compatible prices derived in Section~\ref{subsec:ICPrices}. Under these, every passenger would have the incentive to choose a full-payment strategy on an equilibrium route, and revenues would be the same as the ones before the change in the payment scheme.

One potential issue to this, however, is that some of the incentive-compatible prices that were derived are too high: some trips would cost more than US\$18.00, whereas under 2019 DC prices the highest price was US\$6.00. An alternative approach would be to use incentive-compatible prices, but limit them so that they are never higher than DC prices. That is, setting incentive-compatible prices bounded from above by 2019 DC prices.

At first sight, one might be tempted to, for every origin-destination $(x,y)$, setting the price $P^*_{(x,y)}=\min\left\{P^{DC}_{(x,y)},P_{(x,y)}\right\}$, where $P^{DC}_{(x,y)}$ and $P_{(x,y)}$ are respectively, 2019 DC and incentive-compatible prices. This, however, might result in $P^*$ not being an incentive-compatible pricing scheme. To see this, consider a network with three nodes: $x$, $y$, and $z$, with edges $(x,y)$ and $(y,z)$. Suppose that under DC prices, $P^{DC}_{(x,y)}=\$0.50$, $P^{DC}_{(y,z)}=\$1.00$, and $P^{DC}_{(x,z)}=\$1.50$, and that under incentive-compatible prices, $P_{(x,y)}=P_{(y,z)}=\$0.70$, and therefore $P_{(x,z)}=\$1.40$. If we simply set the prices to be the lowest between the two, we would set $P^*_{(x,y)}=\$0.50$, $P^*_{(y,z)}=\$0.70$ and $P^*_{(x,z)}=\$1.40$. Under $P^*$, however, a passenger traveling from $x$ to $z$ is better off by buying two tickets---$x$ to $y$ and $y$ to $z$. While she will still buy tickets for the entire journey, that configuration is not incentive-compatible, since the ticket purchased does not correspond to the origin-destination pair.

To correctly adjust DC prices to be incentive-compatible, one must eliminate every cheaper route-and-ticket deviation created by the cap itself. In the example above, the value of $P^*_{(x,z)}$ has to be further reduced to $P^*_{(x,z)}=\$1.20=\$0.50+\$0.70$. More generally, fix an admissible route $\sigma=(v_0,\ldots,v_K)$. The code solves a shortest-path dynamic program on that ordered list of nodes. At each node $v_i$, the passenger has two kinds of actions: either leave the next edge $(v_i,v_{i+1})$ uncovered and incur its expected penalty, or buy a ticket from $v_i$ to some later node $v_j$ and pay the currently adjusted origin-destination fare $P^*_{(v_i,v_j)}$. The full-trip ticket from $v_0$ to $v_K$ is excluded from this deviation problem, since the object being tested is precisely whether the current full-trip price is vulnerable to any alternative combination of subtrip tickets and uncovered edges. The dynamic program therefore returns the cheapest deviation on that route.

Starting from the naïve capped prices $\min\{P^{DC}_{(a,b)},P_{(a,b)}\}$, we run this dynamic program on every route between the origin and destination and take the cheapest deviation across routes. If that deviation is cheaper than the current full-trip price, we lower the full-trip price to that deviation cost; otherwise we leave it unchanged. The process repeats until no profitable deviations remain. The resulting fare schedule is deviation-proof, but it is not in general additive: after capping prices origin-destination by origin-destination and then adjusting them downward to remove profitable deviations, there is no reason for the final price of a trip to equal the sum of the final prices of its adjacent subtrips along the same route.\footnote{For a concrete illustration, consider the AM Peak trip from Franconia-Springfield to Shady Grove under uniform monitoring. The DC price is US\$6.00 (the system-wide cap) and the IC price is US\$11.05, so the adjusted price is $P^*=\$6.00$. The subtrips via Farragut West have adjusted prices $P^*(\text{Franconia-Springfield},\text{Farragut West})=\$5.65$ and $P^*(\text{Farragut West},\text{Shady Grove})=\$6.00$, summing to US\$11.65. No passenger would split a US\$6.00 ticket into two tickets costing US\$11.65, so the schedule is deviation-proof; yet $\$6.00\neq\$11.65$, so it is not additive.}

Table~\ref{table:ICLimitedFromAbove} shows the impact on revenue that results from adjusting DC prices to be incentive-compatible using the procedure described above. 

\begin{table}[t]
    \centering
  \small
  \setlength{\tabcolsep}{4pt}
  \resizebox{\textwidth}{!}{%
  \begin{tabular}{l|c|c|c|c|c}
     &  \textbf{AM Peak} & \textbf{Midday} & \textbf{PM Peak} & \textbf{Evening} & \textbf{Total}\\
     \hline
     \hline
   \multicolumn{6}{c}{\textbf{Uniform monitoring}}     \\
   \hline
  Revenue  & \$685,331 & \$278,557 & \$710,739 & \$185,205 & \$1,859,832\\
  Revenue loss percentage & 18.91\% & 25.37\% & 20.21\% & 24.33\% & 20.99\% \\
       \hline
     \hline
   \multicolumn{6}{c}{\textbf{Proportional monitoring}}     \\
   \hline
  Revenue  & \$729,796 & \$307,259 & \$763,908 & \$204,700 & \$2,005,663\\
   Revenue loss percentage & 13.65\% & 17.68\% & 14.24\% & 16.37\% & 14.80\% \\
   \hline
   \hline 
\end{tabular}
  }
    \caption{Incentive-compatible prices limited from above by DC prices and made deviation-proof.}
    \label{table:ICLimitedFromAbove}
\end{table}

Notice that, while the adjustment results in a significant loss of revenue, that pales in comparison to the scenarios simulated in Section~\ref{subsec:DCPricesStrategicPassengers}, where DC prices were used without any incentive adjustments. The comparison of the values in Tables~\ref{table:Counterfactuals} and~\ref{table:ICLimitedFromAbove} shows the increase in revenue that would result from \emph{lowering} the prices for certain paths, and therefore eliminating the incentive to travel without paying when traveling through them.

\section{Conclusion}

Proof-of-payment systems simplify transit operations, but they also create a pricing problem that does not arise under gated entry. Once riders can underpay by traveling uncovered or by buying only part of the trip, the fare schedule itself becomes a mechanism for shaping compliance. This paper characterizes the highest fares that remain incentive-compatible under random inspection. The maximal incentive-compatible fare is additive and equals the expected fine exposure from uncovered travel.

Applied to the Washington DC Metro under fixed demand and route-choice equilibrium, the quantitative implications are large. Keeping current fares after switching to proof-of-payment could reduce revenue by more than 60\%, while adjusting fares to eliminate profitable underpayment reduces that loss to between 15\% and 21\%, even though many fares become lower than under the existing system.

A few features of the analysis are worth keeping in mind when reading these results. Both the theory and the application hold travel demand fixed: the object is fare design that deters underpayment conditional on a travel need, not a general-equilibrium model of ridership. The revenue comparisons are also conservative because we assume risk neutrality; with risk-averse riders, the maximal incentive-compatible prices are strictly higher, so the revenue recovery from incentive-compatible pricing would be even larger. The application also combines 2012 origin-destination demand with 2019 fare tables---DC Metro ridership declined substantially in the intervening years, with average daily ridership in 2019 at about 86.6\% of the 2012 level\footnote{Source: https://www.wmata.com/initiatives/ridership-portal/Metrorail-Ridership-Summary.cfm}---so the precise revenue figures should be read with that mismatch in mind, even though the qualitative pattern is driven by the fare structure and inspection technology rather than the demand level.

Finally, incentive-compatible prices vary sharply across the network: they are highest on peripheral edges, where low ridership means each passenger faces a higher inspection probability, and lowest on busy central edges. This could raise equity concerns if peripheral commuters tend to be lower-income. It is worth stressing, however, that these prices are upper bounds on what the authority can charge while still deterring underpayment; the authority is free to set any fare at or below the incentive-compatible level, and the distributional pattern identifies where the feasibility constraint is tightest rather than prescribing where fares must be set.

\section*{Declaration of generative AI and AI-assisted technologies in the writing process}\label{sec:AIDeclaration}
During the preparation of this work the author(s) used ChatGPT in order to explore ideas, review and make adjustments to the simulations code, make minor edits and rephrase some sentences. After using this tool/service, the author(s) reviewed and edited the content as needed and take(s) full responsibility for the content of the publication.

\bibliographystyle{ecta}
\bibliography{subway}

\newpage
\appendix
\section*{Appendix: Map of the DC Metro in 2012}
\begin{figure}[h]
    \centering
    \includegraphics[scale=0.35]{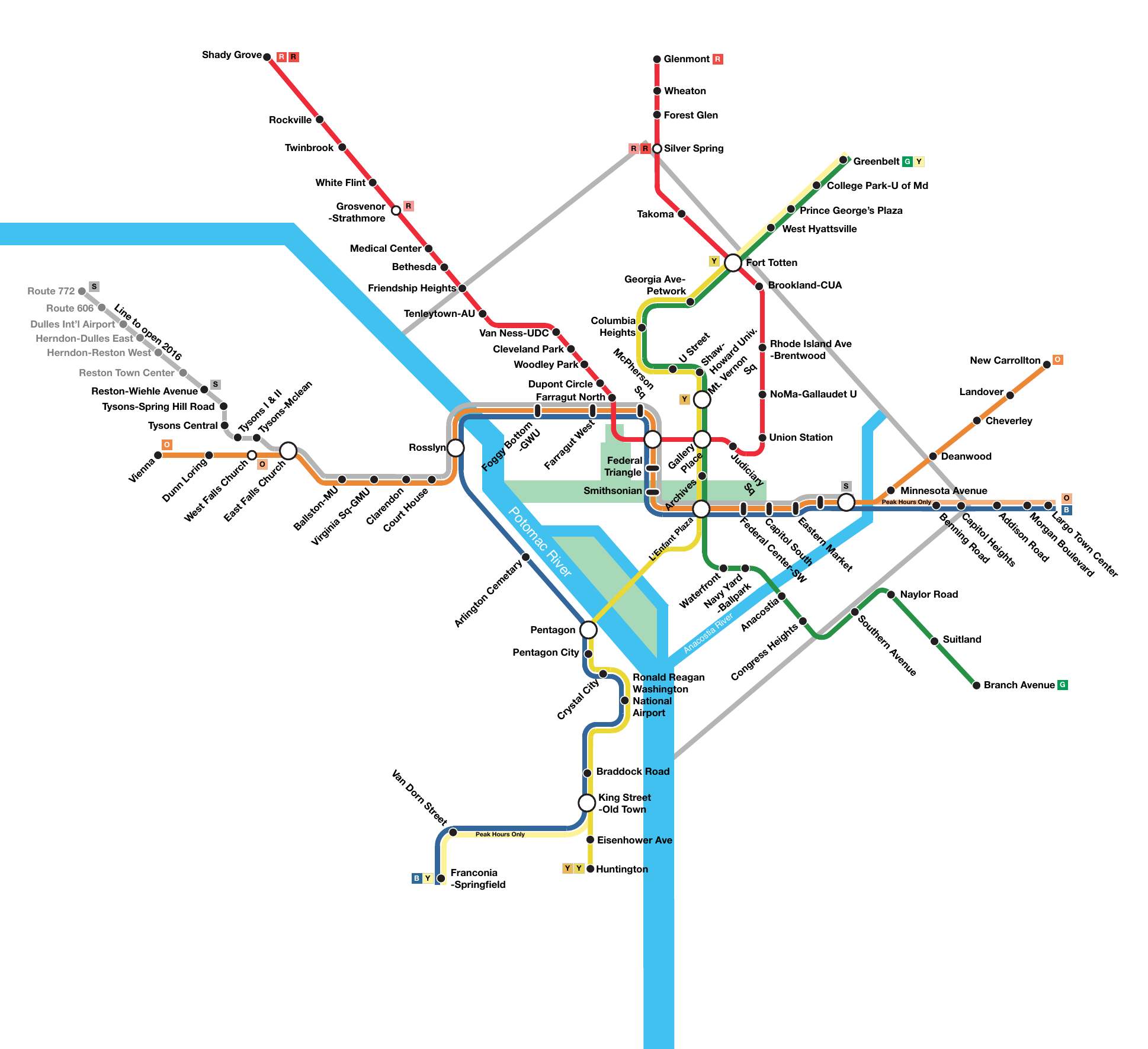}
    \caption{Map of the DC Metro network in 2012. Credit: VeggieGarden, originally posted to Wikipedia.}
    \label{fig:dcMetroMap}
\end{figure}

\section*{Appendix: Counterfactuals with Minimum Deviation Gains}

Tables \ref{table:CounterfactualsThresholdUniform} and \ref{table:CounterfactualsThresholdProportional} summarize the total outcomes of the DC-pricing counterfactuals when passengers only deviate from buying the full ticket if doing so reduces their ticket payments by at least the indicated amount.

\begin{table}[p]
  \centering
  \begin{tabular}{l|c|c|c|c}
   \textbf{Minimum gain} & \textbf{No ticket} & \textbf{Partial ticket} & \textbf{Full ticket} & \textbf{Revenue loss}\\
   \hline
  US\$0.50 & 68.11\% & 2.66\% & 29.22\% & 61.58\%\\
  US\$1.00 & 68.12\% & 0.43\% & 31.45\% & 61.12\%\\
  US\$2.00 & 68.12\% & 0.03\% & 31.85\% & 60.97\%\\
  US\$3.00 & 23.14\% & 0.00\% & 76.86\% & 29.08\%\\
  US\$4.00 & 9.89\% & 0.00\% & 90.11\% & 15.06\%\\
  US\$5.00 & 4.52\% & 0.00\% & 95.48\% & 7.71\%\\
  \end{tabular}
  \caption{Total counterfactual outcomes under DC pricing and random inspection with uniform monitoring, when deviations are only taken if ticket-payment savings reach the indicated amount. Revenue loss is measured relative to the benchmark in which all passengers pay the full DC fare.}
  \label{table:CounterfactualsThresholdUniform}
\end{table}

\begin{table}[p]
  \centering
  \begin{tabular}{l|c|c|c|c}
   \textbf{Minimum gain} & \textbf{No ticket} & \textbf{Partial ticket} & \textbf{Full ticket} & \textbf{Revenue loss}\\
   \hline
  US\$0.50 & 50.91\% & 19.98\% & 29.11\% & 50.85\%\\
  US\$1.00 & 50.91\% & 11.42\% & 37.67\% & 49.02\%\\
  US\$2.00 & 50.91\% & 3.61\% & 45.48\% & 45.77\%\\
   US\$3.00 & 13.48\% & 0.18\% & 86.33\% & 16.90\%\\
   US\$4.00 & 5.49\% & 0.00\% & 94.51\% & 8.17\%\\
   US\$5.00 & 2.13\% & 0.00\% & 97.87\% & 3.59\%\\
  \end{tabular}
  \caption{Total counterfactual outcomes under DC pricing and random inspection with proportional monitoring, when deviations are only taken if ticket-payment savings reach the indicated amount. Revenue loss is measured relative to the benchmark in which all passengers pay the full DC fare.}
  \label{table:CounterfactualsThresholdProportional}
\end{table}

\end{document}